\documentclass[letterpaper, 10 pt, conference]{ieeeconf}
\IEEEoverridecommandlockouts
\usepackage{amsmath,amssymb,amsfonts}

\usepackage{amsthm}
\usepackage{mathtools}
\usepackage{algorithm,algorithmic}
\usepackage{booktabs}
\usepackage{cite}
\usepackage{hyperref}
\usepackage{xcolor}
\usepackage{graphicx}
\usepackage{tikz}
\usetikzlibrary{arrows.meta,positioning,shapes.geometric,fit,backgrounds}
\hypersetup{colorlinks=true,linkcolor=blue,citecolor=blue,urlcolor=blue}
\usepackage{url}

\newtheorem{theorem}{Theorem}

\newtheorem{assumption}{Assumption}
\newtheorem{remark}{Remark}
\newtheorem{definition}{Definition}
\newtheorem{problem}{Problem}

\newcommand{\R}{\mathbb{R}}
\newcommand{\Spp}{\mathbb{S}_{++}^n}
\newcommand{\E}{\mathbb{E}}
\newcommand{\tr}{\mathrm{tr}}
\newcommand{\Normal}{\mathcal{N}}
\newcommand{\ip}[2]{\left\langle #1,#2\right\rangle_F}
\newcommand{\Us}{\mathcal{U}^s}

\begin{document}

\title{Path Integral Control for Partially Observed Systems\\
with Controlled Sensing}

\author{Goutam Das, Takashi Tanaka
\thanks{This work is supported by DARPA COMPASS program grant HR0011-25-3-0210 and AFOSR DSCT program grant FA9550-25-1-0347.}
\thanks{
All authors are associated with the Networked Control Systems Lab at Purdue University. Emails: \{das347,~ttanaka16\}@purdue.edu.}}

\maketitle

\begin{abstract}
Path integral control in Gaussian belief space requires a structural matching condition between the observation-driven diffusion of the belief mean and the actuation authority, which a fixed observation matrix cannot enforce.
We treat the observation matrix as a control variable and show that constraining the sensing control to a measurable selector from the resulting \emph{matching set} reduces the Hamilton--Jacobi--Bellman equation for the belief mean and covariance to a linear PDE with a Feynman--Kac representation.
\end{abstract}

\section{Introduction}

\noindent Path integral control (PIC) \cite{kappen2005path,kappen2005linear,todorov2006linearly} solves a class of stochastic optimal control problems by linearizing the Hamilton--Jacobi--Bellman (HJB) equation via the Cole--Hopf transform, under a structural \emph{matching condition} between the diffusion of the controlled state and the control authority.
When matching holds, the value function admits a Feynman--Kac representation and is computable by forward Monte Carlo sampling; this principle underlies model predictive path integral control (MPPI) and its variants \cite{theodorou2010generalized,williams2017mppi,williams2018information,thijssen2015path,gandhi2021robust,yin2023risk}.
This framework, however, is focused on fully observed systems.
 
Under partial observability, the natural state is the belief over the physical state, whose evolution follows the Kushner--Stratonovich equation \cite{kushner1967dynamical,liptser1977statistics}.
Gaussian belief reductions via the Kalman--Bucy filter are standard for trajectory optimization \cite{bensoussan1992stochastic,krishnamurthy2016partially,platt2010belief,van2012motion,todorov2005generalized}.
Belief-space path integral control under a Gaussian belief approximation was recently developed in \cite{das2026cdc}, which established the Cole--Hopf linearization and Feynman--Kac representation for the belief mean when the observation matrix is fixed.
 
\emph{Problem:} If the observation matrix is itself a control variable, under what conditions does the resulting belief-space Hamilton--Jacobi--Bellman equation admit Cole--Hopf linearization and a Feynman--Kac representation?
 
We show that restricting the sensing control to a measurable selector from the \emph{matching set}---the set of sensing controls that enforce matching at the current covariance---yields a linear equation for the joint belief mean and covariance, whose solution admits a Feynman--Kac representation (Theorem~\ref{thm:main}).
The matching condition---previously determined by the system---becomes an active constraint that the controller enforces through sensor selection, extending the fixed-sensor result of \cite{das2026cdc}.

\begin{figure}[t]
\centering
\begin{tikzpicture}[
  >={Stealth[length=2mm]},
  thick,
  font=\small,
  block/.style = {draw, rectangle, minimum height=8mm,
                  minimum width=16mm, align=center, inner sep=2pt},
]
\node[block] (plant)  at (0,1.8)    {Plant};
\node[block] (sensor) at (2.5,1.8)  {$C(u^s_t)$};
\node[block] (filter) at (5.0,1.8)  {KBF};
 
\node[block, minimum width=65mm] (ctrl) at (2.5,0) {Controller};
 
\draw[->] (plant)  -- node[above] {$x_t$} (sensor);
\draw[->] (sensor) -- node[above] {$y_t$} (filter);
 
\draw[->] (ctrl.north -| plant.south)  -- node[right] {$u^a_t$}
          (plant.south);
\draw[->] (ctrl.north)                 -- node[right] {$u^s_t$}
          (sensor.south);
\draw[->] (filter.south)               -- node[right] {$(\mu_t,\Sigma_t)$}
          (ctrl.north -| filter.south);
\end{tikzpicture}
\caption{Illustration of the closed-loop system with controlled sensing.}
\label{fig:blockdiag}
\end{figure}
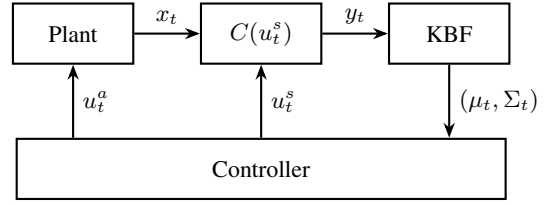

\section{Problem Formulation}
\label{sec:setup}

\noindent Consider the partially observed system
\begin{align}
dx_t &= f(x_t)\,dt + B u^a_t\,dt + H\,dw_t, \label{eq:state}\\
dy_t &= C(u^s_t)\,x_t\,dt + \sigma_o\,d\nu_t, \label{eq:obs}
\end{align}
with $x_t \in \R^n$, $u^a_t \in \R^{\ell_a}$, $u^s_t \in \R^{\ell_s}$, independent standard Brownian motions $w_t \in \R^m$ and $\nu_t \in \R^p$, $B \in \R^{n \times \ell_a}$, $H \in \R^{n \times m}$, $\sigma_o \in \R^{p \times p}$, $C(u^s) \in \R^{p \times n}$, and $Q := H H^\top$, $R_o := \sigma_o\sigma_o^\top \succ 0$.
We take $f(x) = A_t x$ with $A_t \in \R^{n \times n}$, so that for each admissible sensing process $u^s$ the conditional law of $x_t$ given the observation history is Gaussian by the Kalman--Bucy filter with control-dependent coefficients \cite{liptser1977statistics}.
The objective is to minimize
\begin{multline}
    J = \E\bigg[\phi(\mu_T, \Sigma_T) + \int_0^T\!\Bigl(q(\mu_t, \Sigma_t)
+ \tfrac{1}{2}(u^a_t)^\top R_a u^a_t \\
+ \rho(u^s_t)\Bigr)\,dt\bigg], 
\label{eq:cost}
\end{multline}
with $R_a \succ 0$, $\rho \geq 0$, and
\begin{align}
q(\mu, \Sigma) &:= \E_{x \sim \Normal(\mu, \Sigma)}[\bar q(x)],
\nonumber\\
\phi(\mu, \Sigma) &:= \E_{x \sim \Normal(\mu, \Sigma)}[\bar\phi(x)].
\label{eq:cost-aug}
\end{align}

The conditional distribution $\pi_t = \Normal(\mu_t, \Sigma_t)$ has covariance evolving via the controlled Riccati ODE
\begin{equation}
\dot\Sigma_t = a(t, \Sigma_t, u^s_t),
\label{eq:riccati}
\end{equation}
where
\begin{align}
a(t, \Sigma, u) &:= A_t \Sigma + \Sigma A_t^\top + Q
- D(\Sigma, u), \label{eq:a}\\
D(\Sigma, u) &:= \Sigma\,C(u)^\top R_o^{-1} C(u)\,\Sigma.
\label{eq:D}
\end{align}
Let $L(\Sigma, u)\,L(\Sigma, u)^\top = D(\Sigma, u)$.
The belief mean evolves as
\begin{equation}
d\mu_t = \bigl(f(\mu_t) + B u^a_t\bigr)\,dt
+ L(\Sigma_t, u^s_t)\,d\beta_t,
\label{eq:mu}
\end{equation}
where $\beta_t$ is a standard Brownian motion derived from the innovation $d\tilde y_t := dy_t - C(u^s_t)\mu_t\,dt$.
For fixed $C$, equations \eqref{eq:riccati}--\eqref{eq:mu} reduce to the Gaussian belief-space model of \cite{das2026cdc}, in which the Riccati ODE is autonomous and $\Sigma_t$ is a predetermined schedule.
When $C$ depends on $u^s_t$, the Riccati equation \eqref{eq:riccati} becomes non-autonomous, $\Sigma_t$ is a controlled state, and the sufficient statistic is $(\mu_t, \Sigma_t)$. The resulting closed-loop structure is illustrated in Fig. 1.
\begin{problem}[Partially observed control with controlled sensing]
\label{prob:main}
Find an admissible pair of controls $(u^a_t, u^s_t)_{t \in [0,T]}$ minimizing the cost \eqref{eq:cost} subject to the dynamics \eqref{eq:state}--\eqref{eq:obs} and the induced belief evolution \eqref{eq:riccati}--\eqref{eq:mu}.
\end{problem}
\section{Main Result}

\noindent \begin{definition}[Matching set]
\label{def:matching}
For each $\Sigma \in \Spp$,
\begin{equation}
\Us(\Sigma) := \bigl\{u \in \R^{\ell_s} :
D(\Sigma, u) = \lambda\,B R_a^{-1} B^\top\bigr\}.
\label{eq:matching-set}
\end{equation}
\end{definition}

\begin{assumption}[Measurable matching selector]
\label{ass:selector}
There exists a measurable map $\kappa: [0, T] \times \Spp \to \R^{\ell_s}$ with $\kappa(t, \Sigma) \in \Us(\Sigma)$ for all $(t, \Sigma)$.
\end{assumption}

Define
\begin{align}
a_\kappa(t, \Sigma) &:= a(t, \Sigma, \kappa(t, \Sigma)), \nonumber\\
\rho_\kappa(t, \Sigma) &:= \rho(\kappa(t, \Sigma)), \nonumber\\
D^* &:= \lambda B R_a^{-1} B^\top.
\label{eq:kappa-defs}
\end{align}

\begin{assumption}[Regularity]
\label{ass:regularity}
(i) $a_\kappa$ is locally Lipschitz in $\Sigma$ uniformly on compact time intervals, and the Cauchy problem $\dot\Sigma_t = a_\kappa(t, \Sigma_t)$ with $\Sigma_0 \in \Spp$ admits a unique solution on $[0, T]$ with $\Sigma_t \in \Spp$. (ii) $\bar q$, $\bar\phi$, and $\rho$ are bounded below. (iii) The PDE \eqref{eq:linear-pde} admits a bounded classical solution $\Psi$ on $[0, T] \times \R^n \times \Spp$ with the prescribed terminal condition.
\end{assumption}

The \emph{$\kappa$-restricted problem} fixes $u^s_t = \kappa(t, \Sigma_t)$ and optimizes only over $u^a$.
Let $V(t, \mu, \Sigma)$ denote its value function.
The corresponding HJB equation is
\begin{align}
-\partial_t V &= q + \rho_\kappa
+ (\nabla_\mu V)^\top f(\mu) \nonumber\\
&\quad + \ip{\nabla_\Sigma V}{a_\kappa}
+ \tfrac{1}{2}\tr(D^* \nabla^2_{\mu\mu} V) \nonumber\\
&\quad + \min_{u^a}\Bigl\{
\tfrac{1}{2}(u^a)^\top R_a u^a
+ (u^a)^\top B^\top \nabla_\mu V\Bigr\},
\label{eq:hjb}
\end{align}
with terminal condition $V(T, \mu, \Sigma) = \phi(\mu, \Sigma)$.
Arguments of $q$, $\rho_\kappa$, $a_\kappa$ are suppressed on the right.

\begin{theorem}[Constrained-matching Cole--Hopf]
\label{thm:main}
Under Assumptions~\ref{ass:selector}--\ref{ass:regularity}, the substitution $V = -\lambda \log \Psi$ converts \eqref{eq:hjb} into the linear PDE
\begin{equation}
\begin{aligned}
-\partial_t \Psi =\;& -\frac{q + \rho_\kappa}{\lambda}\,\Psi
+ (\nabla_\mu \Psi)^\top f \\
& + \ip{a_\kappa}{\nabla_\Sigma \Psi}
+ \tfrac{1}{2}\tr(D^* \nabla^2_{\mu\mu} \Psi),
\end{aligned}
\label{eq:linear-pde}
\end{equation}
with terminal condition $\Psi(T, \mu, \Sigma) = \exp(-\phi(\mu, \Sigma)/\lambda)$.
Moreover, $\Psi$ admits the Feynman--Kac representation
\begin{equation}
\begin{aligned}
\Psi(t, \mu, \Sigma) = \E^\kappa\!\bigg[&
\exp\!\Bigl(-\tfrac{1}{\lambda}\!\int_t^T\!(q + \rho_\kappa)\,ds
\\[-2pt]
& \;\; - \tfrac{\phi(\mu_T, \Sigma_T)}{\lambda}\Bigr)
\,\bigg|\, \mu_t = \mu,\ \Sigma_t = \Sigma\bigg],
\end{aligned}
\label{eq:fk}
\end{equation}
where under $\E^\kappa$ the augmented state evolves as
\begin{align}
d\mu_s &= f(\mu_s)\,ds + L^*\,d\beta_s, \nonumber\\
\dot\Sigma_s &= a_\kappa(s, \Sigma_s),
\label{eq:augmented}
\end{align}
with $L^* (L^*)^\top = D^*$.
\end{theorem}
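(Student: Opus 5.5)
The proof has two parts: an algebraic Cole--Hopf computation, and a standard Feynman--Kac argument for the augmented diffusion \eqref{eq:augmented}.

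\textbf{Step 1: reduce the HJB.} First I resolve the inner minimization in \eqref{eq:hjb}. It is strictly convex because $R_a \succ 0$. The minimizer is $u^{a*} = -R_a^{-1}B^\top \nabla_\mu V$, with minimum value $-\tfrac12 (\nabla_\mu V)^\top B R_a^{-1} B^\top \nabla_\mu V$.

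\textbf{Step 2: substitute $V = -\lambda\log\Psi$.} This is legitimate on any region where $\Psi>0$. The derivatives are
\begin{align*}
\partial_t V &= -\lambda \Psi^{-1}\partial_t\Psi, \\
\nabla_\mu V &= -\lambda \Psi^{-1}\nabla_\mu\Psi, \\
\nabla_\Sigma V &= -\lambda\Psi^{-1}\nabla_\Sigma\Psi, \\
\nabla^2_{\mu\mu}V &= -\lambda\Psi^{-1}\nabla^2_{\mu\mu}\Psi + \lambda\Psi^{-2}\nabla_\mu\Psi(\nabla_\mu\Psi)^\top .
\end{align*}
Inserting these, the diffusion term produces the cross term $\tfrac{\lambda}{2}\Psi^{-2}(\nabla_\mu\Psi)^\top D^*\nabla_\mu\Psi$. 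The minimized control term produces $-\tfrac{\lambda^2}{2}\Psi^{-2}(\nabla_\mu\Psi)^\top BR_a^{-1}B^\top\nabla_\mu\Psi$.

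\textbf{Step 3: cancel the quadratic terms.} This is the crux of the argument. The two quadratic terms cancel exactly because $D^* = \lambda BR_a^{-1}B^\top$. Under Assumption~\ref{ass:selector}, this identity holds along the $\kappa$-restricted dynamics by Definition~\ref{def:matching}, since $D(\Sigma,\kappa(t,\Sigma)) = D^*$ for every $\Sigma$. It is exactly this identity that allows the $\mu$-diffusion coefficient in \eqref{eq:hjb} to be written as the constant $D^*$.

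\textbf{Step 4: obtain \eqref{eq:linear-pde}.} After the cancellation, I multiply through by $-\Psi/\lambda$. The result is \eqref{eq:linear-pde}. The terminal condition follows from $V(T,\cdot)=\phi$, giving $\Psi(T,\cdot) = e^{-\phi/\lambda}$.

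\textbf{Step 5: Feynman--Kac representation.} Let $\Psi$ be the bounded classical solution from Assumption~\ref{ass:regularity}(iii). Let $(\mu_s,\Sigma_s)$ solve \eqref{eq:augmented} from $(\mu,\Sigma)$ at time $t$. Part (i) of that assumption guarantees that $\Sigma_s$ is well defined and remains in $\Spp$. Since $f$ is linear, the $\mu$-SDE has additive constant noise and a unique strong solution. Set
\[
Z_s := \exp\Bigl(-\tfrac1\lambda\int_t^s (q+\rho_\kappa)\,dr\Bigr).
\]
I then apply It\^o's formula to $Z_s\Psi(s,\mu_s,\Sigma_s)$. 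The $\Sigma$-component is of bounded variation, so it contributes only the first-order term $\ip{a_\kappa}{\nabla_\Sigma\Psi}$. The generator of $(\mu,\Sigma)$ therefore matches the operator in \eqref{eq:linear-pde}, and the drift of $Z_s\Psi$ vanishes. What remains is a local martingale, $\int Z_s (\nabla_\mu\Psi)^\top L^*\,d\beta_s$.

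\textbf{Step 6: pass to a true martingale.} This is the main technical obstacle. I localize with stopping times $\tau_k$, the exit times of $(\mu_s,\Sigma_s)$ from an increasing sequence of compacts in $\R^n\times\Spp$, and take expectations up to $T\wedge\tau_k$. The integrand $Z_s\Psi$ is uniformly bounded, because $\Psi$ is bounded and $Z_s\le e^{cT/\lambda}$ by the lower bounds in Assumption~\ref{ass:regularity}(ii). These lower bounds pass to $q$ and $\phi$ through the Gaussian expectation \eqref{eq:cost-aug}. Since $\tau_k\to\infty$ almost surely by non-explosion, dominated convergence yields \eqref{eq:fk} as $k\to\infty$.

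Finally, I would remark that $\Psi>0$ follows from \eqref{eq:fk}, since the integrand there is strictly positive. This retroactively justifies the logarithm in Step 2, and it identifies $V=-\lambda\log\Psi$ as a classical solution of \eqref{eq:hjb}.
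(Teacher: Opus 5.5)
Your proposal is correct and follows the paper's proof essentially step for step: the same first-order minimization in $u^a$, the same Cole--Hopf derivative identities with the quadratic terms cancelling via $D^* = \lambda B R_a^{-1} B^\top$, and the same It\^o/Feynman--Kac argument, where boundedness of $\Psi$ and the lower bound on $q+\rho_\kappa$ make the discounted process a bounded local martingale. Your explicit localization with dominated convergence simply spells out the paper's ``bounded local martingale is a true martingale'' step, and your closing observation that \eqref{eq:fk} forces $\Psi>0$ is a useful addition, since the paper never justifies taking the logarithm.
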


\begin{proof}
The bracketed minimization in \eqref{eq:hjb} is quadratic in $u^a$ with first-order condition $R_a u^{a*} + B^\top \nabla_\mu V = 0$, giving $u^{a*} = -R_a^{-1} B^\top \nabla_\mu V$.
Substituting:
\begin{align}
&\tfrac{1}{2}(u^{a*})^\top R_a u^{a*}
+ (u^{a*})^\top B^\top \nabla_\mu V \nonumber\\
&\;= \tfrac{1}{2}(\nabla_\mu V)^\top B R_a^{-1} B^\top \nabla_\mu V
- (\nabla_\mu V)^\top B R_a^{-1} B^\top \nabla_\mu V \nonumber\\
&\;= -\tfrac{1}{2}(\nabla_\mu V)^\top B R_a^{-1} B^\top \nabla_\mu V,
\label{eq:foc-sub}
\end{align}
yielding the reduced HJB
\begin{align}
-\partial_t V
&= q + \rho_\kappa
+ (\nabla_\mu V)^\top f \nonumber\\
&\quad + \ip{\nabla_\Sigma V}{a_\kappa}
+ \tfrac{1}{2}\tr(D^* \nabla^2_{\mu\mu} V) \nonumber\\
&\quad - \tfrac{1}{2}(\nabla_\mu V)^\top B R_a^{-1} B^\top \nabla_\mu V.
\label{eq:hjb-reduced}
\end{align}
From $V = -\lambda \log \Psi$,
\begin{align}
\partial_t V &= -\tfrac{\lambda}{\Psi}\partial_t \Psi,
\label{eq:CH-1a}\\
\nabla_\mu V &= -\tfrac{\lambda}{\Psi}\nabla_\mu \Psi,
\label{eq:CH-1b}\\
\nabla_\Sigma V &= -\tfrac{\lambda}{\Psi}\nabla_\Sigma \Psi,
\label{eq:CH-1c}\\
\nabla^2_{\mu\mu} V &= -\tfrac{\lambda}{\Psi}\nabla^2_{\mu\mu}\Psi
+ \tfrac{\lambda}{\Psi^2}(\nabla_\mu \Psi)(\nabla_\mu \Psi)^\top.
\label{eq:CH-2}
\end{align}
Substituting \eqref{eq:CH-1a}--\eqref{eq:CH-2} into \eqref{eq:hjb-reduced} and using $\tr(M v v^\top) = v^\top M v$ on the cross term,
\begin{align}
\frac{\lambda}{\Psi}\partial_t \Psi
&= q + \rho_\kappa
- \frac{\lambda}{\Psi}(\nabla_\mu \Psi)^\top f
- \frac{\lambda}{\Psi}\ip{\nabla_\Sigma \Psi}{a_\kappa}
\nonumber\\
&\quad - \frac{\lambda}{2\Psi}\tr(D^* \nabla^2_{\mu\mu}\Psi)
+ \frac{\lambda}{2\Psi^2}(\nabla_\mu \Psi)^\top D^* (\nabla_\mu \Psi)
\nonumber\\
&\quad - \frac{\lambda^2}{2\Psi^2}(\nabla_\mu \Psi)^\top
B R_a^{-1} B^\top (\nabla_\mu \Psi).
\label{eq:substituted}
\end{align}
By \eqref{eq:kappa-defs}, $D^* = \lambda B R_a^{-1} B^\top$, so the two $\Psi^{-2}$ terms in \eqref{eq:substituted} combine as
\begin{equation}
\frac{\lambda}{2\Psi^2}(\nabla_\mu \Psi)^\top
\bigl[D^* - \lambda B R_a^{-1} B^\top\bigr](\nabla_\mu \Psi) = 0.
\label{eq:cancel}
\end{equation}
Multiplying the surviving terms by $-\Psi/\lambda$ yields \eqref{eq:linear-pde}; the terminal condition follows from $V(T, \cdot) = \phi$.

For \eqref{eq:fk}, the linearity of $f$ from \S\ref{sec:setup} together with Assumption~\ref{ass:regularity}(i) ensures that the augmented system \eqref{eq:augmented} admits a unique strong solution on $[t, T]$ valued in $\R^n \times \Spp$.
Define
\begin{equation}
M_s := \exp\!\Bigl(-\tfrac{1}{\lambda}\!\int_t^s
(q + \rho_\kappa)(r, \mu_r, \Sigma_r)\,dr\Bigr)
\,\Psi(s, \mu_s, \Sigma_s).
\label{eq:martingale}
\end{equation}
By It\^o's formula \cite{karatzas1991brownian}, the drift of $M_s$ equals $\partial_s\Psi + \mathcal{L}_\kappa\Psi - ((q+\rho_\kappa)/\lambda)\Psi$, where
\begin{equation}
\mathcal{L}_\kappa\Psi
= (\nabla_\mu\Psi)^\top f
+ \ip{a_\kappa}{\nabla_\Sigma\Psi}
+ \tfrac{1}{2}\tr(D^*\nabla^2_{\mu\mu}\Psi)
\label{eq:generator}
\end{equation}
is the generator of \eqref{eq:augmented} (no $\Sigma$-Hessian since $\Sigma$ is deterministic).
By \eqref{eq:linear-pde} this drift vanishes, so $M_s$ is a local martingale.
By Assumption~\ref{ass:regularity}(ii)--(iii), there exists a constant $m \in \R$ with $q + \rho_\kappa \ge m$, and $\Psi$ is uniformly bounded on $[t, T] \times \R^n \times \Spp$ by some constant $K$.
Hence
\begin{equation}
|M_s| \le K \exp\!\bigl(-\tfrac{m}{\lambda}(s - t)\bigr)
\le K \exp\!\bigl(|m|(T - t)/\lambda\bigr)
\label{eq:M-bound}
\end{equation}
uniformly on $[t, T]$, so $M_s$ is a true martingale, and taking conditional expectation at $s = T$ yields \eqref{eq:fk}.
\end{proof}

\begin{remark}[Sensor richness]
$\Us(\Sigma)$ may be empty.
If $C(u) = u\,C_0$ for a fixed $C_0 \in \R^{p \times n}$, then $D(\Sigma, u) = u^2\,\Sigma C_0^\top R_o^{-1} C_0\,\Sigma$ scales a single rank-bounded matrix, and matching $D^*$ requires $D^*$ to be proportional to $\Sigma C_0^\top R_o^{-1} C_0\,\Sigma$, which is generically restrictive when $n > 1$.
Sufficient conditions for non-emptiness on a forward-invariant subset of $\Spp$ are the natural next question.
\end{remark}

\begin{remark}[Scope]
\label{rem:scope}
Theorem~\ref{thm:main} solves the matching-constrained variant of Problem~\ref{prob:main}, in which $u^s_t$ is required to lie in $\Us(\Sigma_t)$ at every time.
The unconstrained sensing Hamiltonian
\begin{equation}
H^s(u) = \rho(u) + \ip{\nabla_\Sigma V}{a(t, \Sigma, u)}
+ \tfrac{1}{2}\tr\bigl(D(\Sigma, u)\,\nabla^2_{\mu\mu} V\bigr)
\label{eq:full-Hs}
\end{equation}
need not be minimized within $\Us(\Sigma)$, so the constrained optimum and the full optimum generally differ.
The matching constraint is treated here as a design specification arising from the requirement that path-integral linearization be available.
\end{remark}

\begin{remark}[Nonlinear extension]
\label{rem:nonlinear}
For nonlinear drift $f$, the same Cole--Hopf algebra carries over formally under an extended Kalman filter Gaussian approximation, with a second-order residual in the deviation of the mean from a nominal trajectory; a rigorous residual analysis is left to future work.
\end{remark}

\section{Scalar Example}

\noindent Consider $dx_t = u^a_t\,dt + dw_t$ with $dy_t = u^s_t\,x_t\,dt + \sigma_o\,d\nu_t$, $\rho \equiv 0$, and $n = p = \ell_a = \ell_s = 1$, $B = 1$.
Then
\begin{align}
D(\Sigma, u) &= \frac{\Sigma^2 u^2}{R_o}, \nonumber\\
a(\Sigma, u) &= 1 - \frac{\Sigma^2 u^2}{R_o}.
\label{eq:scalar-D-a}
\end{align}
The matching set is
\begin{equation}
\Us(\Sigma) = \biggl\{
\pm \frac{\sqrt{\lambda R_o}}{\Sigma\sqrt{R_a}}\biggr\},
\label{eq:scalar-set}
\end{equation}
a two-point set.
The positive selector $\kappa(\Sigma) = +\sqrt{\lambda R_o}/(\Sigma\sqrt{R_a})$ is measurable on $\Spp$ and locally Lipschitz away from $\Sigma = 0$.
With this $\kappa$,
\begin{align}
D^* &= \frac{\lambda}{R_a}, \nonumber\\
a_\kappa(\Sigma) &= 1 - \frac{\Sigma^2 \kappa(\Sigma)^2}{R_o}
= 1 - \frac{\lambda}{R_a},
\label{eq:scalar-Dstar}
\end{align}
and \eqref{eq:linear-pde} becomes
\begin{equation}
-\partial_t \Psi = -\frac{q(\mu, \Sigma)}{\lambda}\Psi
+ \Bigl(1 - \frac{\lambda}{R_a}\Bigr)\partial_\Sigma \Psi
+ \frac{\lambda}{2 R_a}\partial^2_{\mu\mu}\Psi.
\label{eq:scalar-pde}
\end{equation}
The covariance trajectory under $\kappa$ is
\begin{equation}
\Sigma_t = \Sigma_0 + \Bigl(1 - \frac{\lambda}{R_a}\Bigr)\,t,
\label{eq:scalar-sigma}
\end{equation}
and Assumption~\ref{ass:regularity}(i) is verified directly:
\begin{itemize}
\item If $\lambda \le R_a$, then $\Sigma_t \ge \Sigma_0 > 0$ for
all $t \ge 0$, the matching schedule is globally feasible, and the critical case $\lambda = R_a$ gives $\dot\Sigma_t \equiv 0$ (the covariance is frozen at $\Sigma_0$).
\item If $\lambda > R_a$, then $\Sigma_t \downarrow 0$ at
\begin{equation}
t_* = \frac{\Sigma_0}{\lambda/R_a - 1},
\label{eq:scalar-t-star}
\end{equation}
after which $\kappa(\Sigma)$ blows up and the trajectory exits $\Spp$.
The result is valid only for horizons $T < t_*$.
\end{itemize}
A bounded sensor $|u^s| \le u_{\max}$ further restricts feasibility to $\Sigma \ge \sqrt{\lambda R_o / R_a}\,/\,u_{\max}$, again a condition on the forward-invariant region of $\Spp$.

\bibliographystyle{IEEEtran}
\bibliography{references_smc}

\end{document}